\newtheorem{theorem}{Theorem}
\newtheorem{lemma}[theorem]{Lemma}
\begin{document}

\sloppy

\title{To Obtain or not to Obtain CSI in the Presence of Hybrid Adversary} 
\author{
  \IEEEauthorblockN{Y.~Ozan Basciftci}
  \IEEEauthorblockA{Dep. of Electrical \& Computer Eng.\\
    The Ohio State University\\
    Columbus, Ohio, USA\\
    Email: basciftci.1@osu.edu} 
  \and
  \IEEEauthorblockN{C.~Emre Koksal}
  \IEEEauthorblockA{Dep. of Electrical \& Computer Eng.\\
    The Ohio State University\\
    Columbus, Ohio, USA\\
    Email: koksal@ece.osu.edu }
  \and
  \IEEEauthorblockN{Fusun Ozguner}
  \IEEEauthorblockA{Dep. of Electrical \& Computer Eng.\\ 
    The Ohio State University\\
    Columbus, Ohio, USA\\
    Email: ozguner@ece.osu.edu}
}



\maketitle

\begin{abstract}
We consider the wiretap channel model under the presence of a hybrid, half duplex adversary that is capable of either jamming or eavesdropping at a given time. We analyzed the achievable rates under a variety of scenarios involving different methods for obtaining transmitter CSI.
Each method provides a different grade of information, not only to the transmitter on the main channel, but also to the adversary on all channels. Our analysis shows that main CSI is more valuable for the adversary than the jamming CSI in both delay-limited and ergodic scenarios. Similarly, in certain cases under the ergodic scenario, 
interestingly, no CSI may lead to higher achievable secrecy rates than with CSI.
\end{abstract}
\section{Introduction}
Information theoretic security has received a significant attention recently. One mainstream direction has been on the wireless transmission of confidential messages from a source to a destination, in the presence of internal and/or external eavesdroppers. Toward achieving that goal, the communicating pair exploits the stochasticity and the asymmetry of wireless channels between the communicating pair and the eavesdroppers. A stochastic encoder at the transmitter makes use of the available channel state information (CSI) in a way for the mutual information leaked to the adversaries remain arbitrarily small. It is designed in a way that, even when the adversaries have access to the full CSI of the main channel, i.e., between the transmitter and the receiver as well as the eavesdropper channel, i.e., between the transmitter and itself, it still will obtain an arbitrarily low rate of information on the message. Likewise, the adversary relies on CSI to make decisions. For instance, a half-duplex hybrid adversary, capable of jamming or eavesdropping at a given time (but not both simultaneously) decides between jamming vs. eavesdropping, based on the available CSI.
 
The assumption that the adversaries have full CSI of all channels is typical in the literature~\cite{liang2008}-\cite{xang2009}. While this assumption leads to robust systems in terms of providing security as it makes no assumptions on the adversaries, it can be too conservative in some cases. For example, to obtain main CSI, an adversary relies on the same resource as the transmitter: feedback from the legitimate receiver. Hence, from the perspective of the receiver, there is a tradeoff between revealing CSI and keeping it secret: If the legitimate receiver chooses not to reveal CSI, it will sacrifice some achievable rate of reliable communication, but the adversary will have to make decisions under a higher uncertainty. In this paper, we study the tradeoffs involving obtaining CSI. We ask the questions, should a legitimate pair obtain CSI and if so, what should be their strategy?

To that end, we focus on the system depicted in Figure~\ref{fig:sysmdl}. We assume all three channels to be block fading. In each block, based on the available CSI, the half-duplex adversary can choose to do jamming at a fixed transmission power or eavesdropping, but not both. Our objective is to maximize the rate of reliable communication over the main channel, subject to full equivocation~\cite{wyner1975} (weak secrecy) at the adversary. The adversary can follow an arbitrary strategy in its choice of jamming vs. eavesdropping at any given block. In the case in which the receiver feeds back main CSI, it may do so in two ways: directly by sending back the exact state of the channel or by sending reverse pilots, trying to exploit channel reciprocity (similar to~\cite{marteza2006}). While the former method completely reveals the main CSI, it eliminates the possibility of the adversary to learn the jammer CSI. On the other hand, while the pilot feedback successfully hides the main CSI, it enables the adversary to estimate the jammer CSI. In terms of the secrecy encoding strategies, we address the possibilities under two general scenarios: the ergodic and the delay-limited. In the former case, one message is encoded across infinitely many blocks and in the latter case, a separate message is encoded over each block, to be decoded immediately. Thus, in the delay-limited scenario, we also impose an additional probabilistic constraint on the decoding and secrecy outage events.

In the delay limited scenario, we show that by revealing the main CSI, the receiver achieves a higher secrecy rate under the outage constraint, compared to transmission with no CSI. Furthermore, we show that main CSI is more valuable for the adversary than the jamming CSI in both delay-limited and ergodic scenarios. In the ergodic scenario, we observe that the transmitter may not need the CSI to achieve higher secrecy rates.

There is a recent research interest on hybrid adversaris. In~\cite{mukher2013}, the authors formulate the MIMO wiretap channel as a two player zero-sum game in which the payoff function is an achievable ergodic secrecy rate. The strategy of the transmitter is to send the message in a full power or to utilize some of the available power to produce artificial noise. The conditions under which pure Nash equilibrium exists are studied. In~\cite{amariucai2012}, the authors consider fast fading main and eavesdropper channels and static jammer channel. Under this channel configuration, they propose a novel encoding scheme which is called block-Markov Wyner secrey encoding. In~\cite{zhou2012}, the authors introduce a pilot contamination attack in which the adversary jams during the reverse training phase to prevent the transmitter from estimating the main CSI correctly.  As a result, the transmitter incorrectly designs precoder which will increase the signal strength at the adversary that eavesdrops the main channel during the data transmission phase.

The rest of this paper is organized as follows. In Section
\ref{chap:system}, we first describe the system model. We then explain the channel model and CSI feedback models. At the end of the section, we explain the problem formulations. In Section~\ref{chap:result}, we present the results for both delay limited and ergodic scenarios. In Section~\ref{chap:numeric}, we present our numerical results and conclude the paper in Section~\ref{chap:conc}.
\begin{figure}[t]
   \centering
   \includegraphics[width=0.4\textwidth]{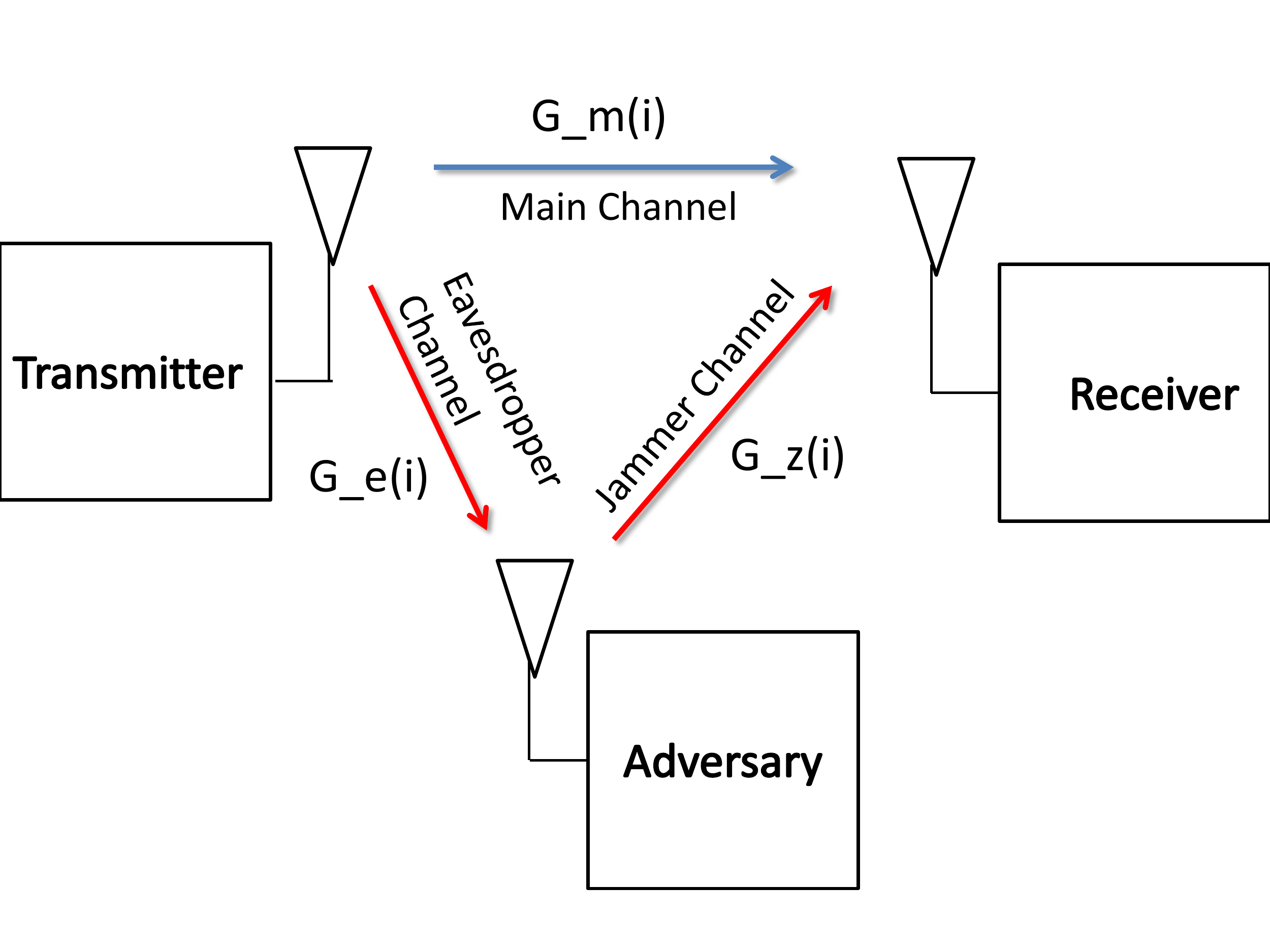}
   \caption{System Model}
   \label{fig:sysmdl}
 \end{figure}
\section{System Model and Problem Formulation}
\label{chap:system}
\subsection{Channel Model}
\label{channelmodel}
In this paper, we focus on a block fading channel model. Time is divided into discrete blocks and there are $N$ channel uses in each block. Channel state is assumed to be constant within a block and varies randomly from one block to the next. We assume all parties are half-duplex, thus the adversary can not jam and eavesdrop simultaneously.

The observed signals at the legitimate receiver and the adversary in $i$-th block are as follows:
\begin{align}
Y^N(i)& = G_m(i)X^N(i)+ G_z(i)S_j^N(i) I_J(i)+ S_m^N(i)\\
Z^N(i)& = G_e(i)X^N(i)(1-I_J(i)) + S_e^N(i) 
\end{align}
\noindent where $X^N(i)$  is the transmitted signal, 
$P$ is the transmission power, 
$Y^N(i)$  is the signal received by the legitimate receiver,  $Z^N(i)$ is the signal received by the adversary, $S_j^N(i)$, $S_m^N(i)$, and $S_e^N(i)$ are noise vectors distributed as complex Gaussian, $\mathcal{CN}(\mathbf{0},P_jI_{N\times N})$, $\mathcal{CN}(\mathbf{0},I_{N\times N})$, and $\mathcal{CN}(\mathbf{0},
I_{N\times N})$, respectively, and $P_j$ is the jamming power. Indicator function $I_J(i)=1$, if the adversary is in a jamming state in the $i$-th block; otherwise  $I_J(i)=0$. Channel gains, $G_m(i)$, $G_e(i)$, and $G_z(i)$ are defined to be the independent complex gains of transmitter-to-receiver channel, transmitter-to-adversary channel, and adversary-to-receiver channel, respectively (as illustrated in Figure~\ref{fig:sysmdl}). Associated power gains are denoted with $H_m(i)= \lvert G_m(i)\rvert^2$, $H_e(i)= \lvert G_e(i)\rvert^2$, and $H_z(i)= \lvert G_z(i)\rvert^2$. We assume that channel reciprocity principle is valid for all channels, i.e., reverse channels and forward channels have identical gains. We also assume that joint probability density function of instantaneous power gains, $f_\mathbf{H}(\mathbf{h})$, where $\mathbf{H} =\left[ H_m(\cdot)\;H_e(\cdot)\;H_z(\cdot)\right]$, is well defined and known by all entities.

\subsection{Methods for Obtaining CSI}
The legitimate pair may choose to obtain main CSI or communicate without it. We call the latter strategy the {\em no CSI case}. If they choose to obtain main CSI, at the beginning of each time block, the transmitter sends training symbols to the legitimate receiver. In this paper, we ignore the overhead associated with this training process. We assume
that, using the training symbols sent at the beginning of block $i$, the legitimate receiver obtains perfect knowledge of $G_m(i)$ and the adversary obtains perfect knowledge of $G_e(i)$.

Once the receiver observes main CSI, it uses two possible methods for feeding back this information. The first one is directly feeding back the observed channel state: the value of $G_m(i)$ is encoded at the receiver and sent to the transmitter in a feedback packet. Thus, we call this feedback method the {\em packet feedback}. We assume that the legitimate receiver and the adversary both decode this packet successfully and learn $G_m(i)$.
The second method is using pilot based CSI feedback in which the receiver sends training symbols to the transmitter. We call this second method the {\em pilot feedback}. We assume that by using these reverse training symbols, the legitimate transmitter obtains perfect knowledge of $G_m(i)$ and the adversary obtains the perfect knowledge of $G_z(i)$. Thus, in the first method, the adversary obtains the knowledge of $G_m(i)$, but not $G_z(i)$, whereas the reverse is true in the second method.

\subsection{Adversary Model}
The goal of the adversary is to minimize the achieved secrecy rate. The strategy space of the adversary in each block is binary: jamming or eavesdropping. 
The transmitter does not observe the strategy of the adversary in any given block, whereas we assume that the adversary knows the strategy of the transmitter a priori.

From one block to the next, the adversary chooses its strategy based on the transmitter's strategy and the obtained channel power gains\footnote{The realizations of the random variables are represented by lower case letters in the sequel.}, i.e., $h_e(i)$ for the no CSI case, $h_e(i)$, $h_m(i)$ for the packet feedback case, and $h_e(i)$, $h_z(i)$ for the pilot feedback case. We denote the vector of channel power gains observed by the adversary at the beginning of the $i$th block with $\mathbf{h}_{A}(i)$.

The secrecy level of a transmitted message is measured by the equivocation rate at the adversary. The equivocation rate at the adversary is defined as the entropy of the transmitted message conditioned on the channel output and the available CSI at the adversary. If the equivocation rate is equal to the secrecy rate, the message is said to be transmitted with {\em perfect secrecy}.

\subsection{Encoding of Information}
\label{subsec:encoding}
In our system, we consider two levels of encoding. At the higher level, secrecy is realized Wyner code introduced in~\cite{wyner1975}. There, $C_s(R_m,R_s,NM)$ represents a Wyner code of size $2^{NMR_m}$ that bears a confidential message set $W_s= \{1,2,\dots,2^{NMR_s}\}$, where $NM$ is the codeword length in number of bits. A message, $w_s\in W_s$ is mapped to $NMR_m$ bits by a secrecy encoder~\cite{wyner1975} and these $NMR_m$ bits are then mapped to channel encoded bits at the lower level of encoding using a sequence of codes, $C_i(2^{NR(h_m(i))},N)$, one for each block $i,\ 1\leq i \leq M$. Here, the code rate, $R(h_m(i))$, is chosen based on the main CSI, obtained at the transmitter. The sequence of codewords is denoted with $x^{NM} \in \mathcal{X}^{NM}$, and the decoder, $\phi(\cdot)$  maps the received sequence, $Y^{NM}$ to $\hat{w}\in \mathcal{W}$. The average error probability of the sequence of codes $\{C_i\}$ is denoted with the associated sequence $P_e^{NM}$.
In this paper, we focus on two scenarios as to how secrecy encoding and channel encoding are applied: delay limited and ergodic. 

\subsubsection{Delay-Limited Scenario}
In the delay-limited scenario, the transmitter encodes a separate secret message, $W_s(i)$ in each block $i$. Consequently, we use a separate secrecy encoder $C_s(i)=C_s(R_m(i),R_s,N)$ for each block, where $R_m(i)$ is chosen to be identical to $R(h_m(i))$ to meet the channel rate. The channel encoder merely maps these $NR_m(i)$ bits to $N$ Gaussian random variables, $X^N$, forming a Gaussian codebook $C_i(2^{NR(h_m(i)},N)$. For the delay-limited scenario, we define the secrecy outage and connection outage events~\cite{xang2009} as:
\begin{align}
\frac{I(X^N;Z^N|h_e(i))}{N}&> R(h_m(i))-R_s \quad \text{and} \\
\frac{I(X^N;Y^N|h_m(i))}{N} &< R(h_m(i)),
\end{align}
respectively, where $X^N\sim\mathcal{CN}(\mathbf{0},PI_{N\times N})$
\subsubsection{Ergodic Scenario}
\label{chap:ergodicformulation}
In the ergodic scenario, the transmitter has one secret message $W_s$ and encodes it over $M$ blocks using the $C_s(R_m,R_s, NM)$ encoder. In the ergodic scenario, secrecy rate $R_s$ is said to be achievable if, for any $\epsilon > 0$, there exists sequence of channel codes $\{C_i\}$ for which the following are satisfied:
\begin{align}
&P_e^{NM} \leq \epsilon \label{cond1}\\
&\frac{1}{MN} H(W_s| Z^{MN}, \mathbf{h}_{A}^M) \geq R_s-\epsilon \label{cond2}
\end{align}
for sufficiently large N and M and for any $\mathbf{h}_{A}^M \in \mathcal{A}_M$ such that $P[\mathcal{A}_M]=1$.
Here, we consider two possible channel encoding strategies:

\noindent {\bf 1. Encoding across blocks:} We use this strategy in the no main CSI case. In this strategy, the $NMR_m$ bits at the output of the secrecy encoder is channel encoded via a single $C(2^{NMR_m},NM)$ Gaussian codebook and the $NM$ symbols are transmitted over the channel over $M$ blocks.

\noindent {\bf 2. Block-by-block encoding:} We use this strategy when the main CSI is available. To utilize the main channel knowledge, the transmitter chooses some $h_z^{*}$ and encodes the information using a Gaussian codebook with the rate $R(h_m(i))=\log\left(1+\frac{Ph_m(i)}{1+P_jh_z^{*}}\right)$ over block $i$. With this choice, the transmitted codeword will not be decoded successfully if the adversary is in the jamming state and $h_z(i) > h_z^{*}$. To handle this possibility, we use a plain ARQ strategy in each block (similar to~\cite{rezki2012}). Transmissions that receive a negative acknowledgement (NAK) are retransmitted until they are decoded successfully.
\subsection{Problem Formulation}
\label{sec:formulation}
One can notice that, in both the delay-limited and the ergodic scenarios, we use a constant secrecy rate\footnote{For the delay-limited case, it is constant over the entire sequence of Wyner codes.} $R_s$. The goal of the transmitter is to maximize the secrecy rate, $R_s$ over the strategy space of secrecy encoding and channel encoding rates. To that end, we consider the worst case scenario, in which the adversary perfectly tracks the strategy of the transmitter in each block. Furthermore, since the transmitter does not know the real state of the adversary, the strategy pair of the transmitter should satisfy the constraint for any arbitrary strategy of the adversary. Thus, we choose the secrecy rate using:
\begin{equation}
R_s^*=\max_{R_m(i),R_s} \min_{I_{J}(i)} R_s \label{problem}
\end{equation}
for all $i$ in both the delay-limited and the ergodic scenarios, subject to the following outage constraint in the delay-limited scenario only:
\begin{align}
\lim_{M\to \infty} \frac{1}{M}\sum_{i=1}^M I_J (i) I_C(i) +(1-I_J (i)) I_S(i) I_C(i) \geq \alpha 
\label{Threshold}
\end{align}
with probability 1, where $I_C(i)$ and $I_S(i)$ are indicator functions that take on a value $0$ in case of a connection and a secrecy outage, respectively. We evaluate $R_s^{*}$ under the no CSI, packet feedback, and pilot feedback cases. Note that the constraint enforces that the
fraction of packets that are not in the both secrecy outage and
the connection outage should be larger than a threshold as the
number of blocks, $M$, goes to infinity.
 
Note that, the reason why we focus on the specific encoding strategies specified under delay limited and ergodic scenarios in Section~\ref{subsec:encoding} is that, the solution, $R_S^*$, of the maximin problem stated in (\ref{problem}) is unknown~\cite{liang2007}.  
\section{Results}
\label{chap:result}
\subsection{Delay Limited Scenario}
The outage constraint in~(\ref{Threshold})  is formulated as
\allowdisplaybreaks
\begin{align}
&C =\lim_{M\to \infty} \frac{1}{M}\sum_{i=1}^M I_J (i) I_{\log\left(1+\frac{PH_m(i)}{1+P_j H_z(i)}\right) \geq R(H_m(i))} \nonumber \\
&+(1-I_J (i)) I_{\log\left(1+PH_e(i)\right) \leq R(H_m(i))-R_s}\nonumber\\
&\qquad\qquad\qquad\qquad\qquad\times I_{\log\left(1+PH_m(i)\right) \geq R(H_m(i))}
\label{constraint2}
\end{align}
Note that in (\ref{constraint2}), the secrecy outage event is represented with $\log\left(1+Ph_e(i)\right) > R(h_m(i))-R_s$ and the connection outage events are given as $\log\left(1+\frac{Ph_m(i)}{1+P_j h_z(i)}\right) < R(h_m(i))$ and $\log\left(1+Ph_m(i)\right) < R(h_m(i))$.
\begin{theorem}
For the delay limited scenario, the solution of (\ref{problem}), subject to constraint~(\ref{constraint2}) leads to the following ordering of the achievable rate with respect to the type of feedback:
\begin{equation}
R_s ^{\text{No CSI}} \leq R_s ^{\text{Packet feedback}} \leq R_s^{\text{Pilot Feedback}}
\end{equation}
\label{thm:main}
\end{theorem}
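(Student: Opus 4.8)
The plan is to collapse the block-wise max--min in~(\ref{problem}) under the almost-sure constraint~(\ref{constraint2}) into a single static optimization, and then to compare the resulting feasible regions across the three feedback types. Since the blocks are i.i.d. and the adversary picks $I_J(i)$ as a function of its observation $\mathbf{h}_A(i)$ only, the worst-case adversary is memoryless, and the strong law of large numbers turns the long-run average in~(\ref{constraint2}) into an expectation. Writing $\mathcal{J}$ for the event of no connection outage when a block is jammed and $\mathcal{E}$ for the event of neither a secrecy nor a connection outage when a block is eavesdropped, the per-block payoff is $p_J=\Pr[\mathcal{J}\mid\mathbf{h}_A]$ if the adversary jams and $p_E=\Pr[\mathcal{E}\mid\mathbf{h}_A]$ if it eavesdrops. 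Minimizing block-by-block yields the equivalent constraint
\begin{equation}
\min_{I_J}C=\mathbb{E}_{\mathbf{h}_A}\bigl[\min(p_J,p_E)\bigr]\ge\alpha ,
\end{equation}
so that $R_s^{*}$ in each case is the largest $R_s$ for which some rate function $R(\cdot)$ makes this hold.

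For the upper inequality (packet $\le$ pilot), the key observation is that in the packet case $\mathcal{E}$ is a deterministic function of $\mathbf{h}_A=(h_e,h_m)$, while the only extra randomness in $\mathcal{J}$ is $h_z$, which is independent of $(h_e,h_m)$. Hence $p_E\in\{0,1\}$, so $\min(p_J,p_E)=p_J\,p_E$, which averages to the full-information value
\begin{equation}
\min_{I_J}C^{\text{Packet}}=\mathbb{E}\bigl[\mathbf{1}_{\mathcal{J}}\mathbf{1}_{\mathcal{E}}\bigr]=\Pr[\mathcal{J}\cap\mathcal{E}] .
\end{equation}
For any other observation, conditional Jensen applied to the concave map $\min$ gives $\mathbb{E}[\min(p_J,p_E)]\ge\Pr[\mathcal{J}\cap\mathcal{E}]$; in particular $\min_{I_J}C^{\text{Pilot}}\ge\Pr[\mathcal{J}\cap\mathcal{E}]=\min_{I_J}C^{\text{Packet}}$ for every fixed $(R(\cdot),R_s)$. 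Thus the packet feasible region is contained in the pilot one, and $R_s^{\text{Packet}}\le R_s^{\text{Pilot}}$.

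For the lower inequality (no CSI $\le$ packet), the transmitter is forced to a constant rate $R$ and the adversary sees only $h_e$. I would exploit the structural nesting $\mathcal{J}\subseteq\{\log(1+Ph_m)\ge R\}$ (jamming only tightens the connection event), so the connection-outage part of $\mathcal{E}$ is implied by $\mathcal{J}$ and, in particular, $\Pr[\mathcal{J}]\le\Pr[\log(1+Ph_m)\ge R]$. Combining this with $h_e\perp(h_m,h_z)$, the no-CSI worst-case value reduces to $\Pr[\text{no secrecy outage}]\cdot\Pr[\mathcal{J}]$. Feeding the \emph{same} constant $R$ and the same $R_s$ into the packet formula of the previous paragraph gives exactly $\Pr[\mathcal{J}\cap\mathcal{E}]=\Pr[\text{no secrecy outage}]\cdot\Pr[\mathcal{J}]$, the identical number. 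Hence the optimal no-CSI pair is already feasible for packet feedback, and optimizing over non-constant $R(\cdot)$ can only help, giving $R_s^{\text{No CSI}}\le R_s^{\text{Packet}}$.

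The delicate part is the first reduction: justifying that an arbitrary, possibly history-dependent adversary cannot beat the memoryless per-block rule, and that the clause ``for all $\mathbf{h}_A^M\in\mathcal{A}_M$ with $P[\mathcal{A}_M]=1$'' is equivalent to the single expectation, both require an ergodic/SLLN argument together with an interchange of the supremum over $R(\cdot)$ and the infimum over adversary strategies. A secondary subtlety, which I expect to be the real crux, is that the three observation sets are \emph{not} nested ($h_m$ versus $h_z$), so the ordering cannot follow from a generic ``more information hurts the legitimate pair'' monotonicity; it must instead rest on the two structural identities above---that the packet adversary is already omniscient, $\min_{I_J}C^{\text{Packet}}=\Pr[\mathcal{J}\cap\mathcal{E}]$, and that $\mathcal{J}\subseteq\{\log(1+Ph_m)\ge R\}$.
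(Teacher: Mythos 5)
Your proposal is correct and follows essentially the same route as the paper: the SLLN reduction under a stationary set-valued jamming rule (the paper's $A_p$, so your flagged worry about history-dependent adversaries is shared with, not resolved by, the paper), the per-block minimization yielding exactly the paper's minimized constraints $C^{\text{No CSI}}$ and $C^{\text{Packet Feedback}}$, the embedding of the constant-rate no-CSI pair into the packet feasible set, and your key observation that the packet adversary already attains the omniscient floor $\Pr[\mathcal{J}\cap\mathcal{E}]$ is precisely the paper's full-duplex lower-bound argument ($C^{\text{Lower Bound}}=C^{\text{Packet Feedback}}$ via the enlarged outage set $\mathcal{O}^{\text{Upper Bound}}$ with $I_J(i)=I_E(i)=1$). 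One small repair: the inequality $\mathbb{E}\left[\min(p_J,p_E)\right]\geq\Pr[\mathcal{J}\cap\mathcal{E}]$ is not conditional Jensen (Jensen for the concave $\min$ gives $\mathbb{E}\left[\min(p_J,p_E)\right]\leq\min\bigl(\mathbb{E}[p_J],\mathbb{E}[p_E]\bigr)$, which compares against the wrong quantity); it follows instead from the pointwise bound $\mathbf{1}_{\mathcal{J}\cap\mathcal{E}}\leq\min\bigl(\mathbf{1}_{\mathcal{J}},\mathbf{1}_{\mathcal{E}}\bigr)$ together with conditioning on $\mathbf{h}_A$ and the tower property, i.e., exactly the ``no half-duplex strategy beats the full-duplex adversary'' comparison the paper makes explicitly.
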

The above theorem implies that main CSI, which is obtained with packet feedback is more valuable for the adversary than the jammer CSI, which is obtained with pilot feedback.
We now give an outline for the proof. The details of the proof can be found in Section~\ref{chap:proof}.
\begin{proof}[The Outline of Proof of Theorem~\ref{thm:main}] The basic idea is to compare the feasible set of the problem (\ref{problem})  for three cases. The feasible set is defined as 
\begin{equation}
\mathcal{F} = \{(R_s, R(\cdot)) : C \geq \alpha \text{ w.p. 1,   } \forall A_p\}
\end{equation}
\noindent where $A_p$ is the set of channel power gains, $h_A(i)$, such that the adversary is in the jamming state if $h_A(i)\in A_p$. The equivalent form of (\ref{problem}) is as follows: $R_s^{*} = \max_{(R_s, R(\cdot))\in\mathcal{F}} R_s \label{equivalent1}$. We observe that the solution to (\ref{problem}) is directly related to the size of the feasible set. The strategy pair $(R_s, R(\cdot))$ is the element of the feasible set, $\mathcal{F}$ if  $C_{\text{min}}=\min_{A_p}C[(R_s, R(\cdot))] \geq \alpha\text{ w.p. 1}$. We can write $C_{\text{min}}$ for the no CSI and packet feedback cases as follows:
\begin{align}
&C_{min}^{\text{Packet Feedback}}=\nonumber\\
& P\left[R_s +\log\left(1+PH_e\right)\leq R(H_m)\leq \log\left(1+\frac{PH_m}{1+P_j H_z}\right)\right]\nonumber\\
&C_{min}^{\text{No CSI}}=\nonumber\\
&P\left[R_s +\log\left(1+PH_e\right)\leq R\leq \log\left(1+\frac{PH_m}{1+P_j H_z}\right)\right]\nonumber
\end{align}
We can observe that $F^{\text{No CSI}} \subset F^{\text{Packet Feedback}}$ so we have $R_s ^{\text{No CSI}} \leq R_s ^{\text{Packet feedback}}$. By assuming the adversary is full-duplex, we find a lower bound, $C^{\text{lower bound}}$ for $C_{min}$ at CSI feedback cases. Then, we observe $C^{\text{lower bound}}=C_{min}^{\text{Packet Feedback}}$ which shows $C_{min}^{\text{Packet Feedback}}\leq C_{min}^{\text{Pilot Feedback}}$ and this concludes the proof.
\end{proof}
\subsection{Ergodic Scenario}
We first present  a secrecy rate that is achievable under the no CSI case.
\begin{theorem}\label{thm:nocsi}
The achieved secrecy under no CSI is:
\begin{align}
&R_s^{\text{No CSI}}= \nonumber\\ &\left[\mathbf{E}\left[\log\left(1+\frac{PH_m}{1+P_jH_z}\right)-\log\left(1+PH_e\right)\right]\right]^{+}\label{nofeedback}
\end{align}
\end{theorem}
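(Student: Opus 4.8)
The plan is to analyze the \emph{encoding across blocks} scheme of Section~\ref{subsec:encoding} and to reduce the adversarial (half-duplex) problem to a fixed, non-adversarial ergodic fading wiretap channel by dominating the adversary with a \emph{full-duplex} one. First I would fix the i.i.d.\ $\mathcal{CN}(0,P)$ Gaussian codebook $C(2^{NMR_m},NM)$ together with a Wyner secrecy layer that reserves a confusion rate $R_m-R_s$. Because the codeword symbols are memoryless and, in the no CSI case, the adversary's per-block action $I_J(i)$ is a (possibly randomized) function of its current observation $h_e(i)$ only, the block gains and jamming decisions form an i.i.d.\ process; the per-block mutual informations are then well defined and the ergodic theorem gives that the normalized mutual informations to the receiver and to the adversary converge almost surely to $\bar I_Y$ and $\bar I_Z$, where
\begin{align}
\bar I_Y &= \mathbf{E}\big[I_J\log(1+\tfrac{PH_m}{1+P_jH_z})\big] \nonumber\\
&\quad + \mathbf{E}\big[(1-I_J)\log(1+PH_m)\big],\nonumber\\
\bar I_Z &= \mathbf{E}\big[(1-I_J)\log(1+PH_e)\big].\nonumber
\end{align}

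Next I would pin down $R_m$ and $R_s$ from the two requirements of the ergodic formulation. For the reliability condition~(\ref{cond1}), the rate must satisfy $R_m\le\bar I_Y$ for the realized strategy, so robustness demands $R_m\le\min_{I_J}\bar I_Y$. Since $\log(1+\tfrac{PH_m}{1+P_jH_z})\le\log(1+PH_m)$, the minimizer is $I_J\equiv 1$ (always jam), giving $R_m=\mathbf{E}[\log(1+\tfrac{PH_m}{1+P_jH_z})]$. For the secrecy condition~(\ref{cond2}), the Wyner layer drives the equivocation $\tfrac{1}{MN}H(W_s\mid Z^{MN},\mathbf{h}_A^M)$ toward $R_m-\bar I_Z$ provided the confusion rate obeys $R_m-R_s\ge\bar I_Z$; robustness demands $R_m-R_s\ge\max_{I_J}\bar I_Z$, whose maximizer is $I_J\equiv 0$ (always eavesdrop), giving $R_m-R_s\ge\mathbf{E}[\log(1+PH_e)]$. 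Combining the two yields $R_s=\big[\mathbf{E}[\log(1+\tfrac{PH_m}{1+P_jH_z})]-\mathbf{E}[\log(1+PH_e)]\big]^{+}$, matching~(\ref{nofeedback}).

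The crux, and the conceptual step that makes this clean, is that a single fixed pair $(R_m,R_s)$ must satisfy both~(\ref{cond1}) and~(\ref{cond2}) simultaneously for \emph{every} adversary strategy, even though the reliability-worst strategy (always jam) and the secrecy-worst strategy (always eavesdrop) differ. I would resolve this by introducing the dominating full-duplex adversary that both jams and eavesdrops in every block: $\bar I_Y$ is monotonically decreasing in the jamming region while $\bar I_Z$ is monotonically increasing in the eavesdropping region, so for an arbitrary half-duplex strategy one has $\bar I_Y\ge\mathbf{E}[\log(1+\tfrac{PH_m}{1+P_jH_z})]=R_m$ (reliability survives) and $\bar I_Z\le\mathbf{E}[\log(1+PH_e)]=R_m-R_s$ (the reserved confusion rate still covers the leakage). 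Hence the one code is reliable and secure against all half-duplex strategies, and the value is tight because the always-jam adversary caps $R_m$ while the always-eavesdrop adversary caps $R_m-R_s$, so no larger $R_s$ is robustly achievable by this scheme.

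I expect the main obstacle to be the rigorous equivocation analysis for a \emph{causally adaptive} adversary: showing that $\tfrac{1}{MN}I(X^{MN};Z^{MN}\mid\mathbf{h}_A^M)$ concentrates on $\bar I_Z$ when $I_J(i)$ is chosen causally from $h_e(i)$, and that the Wyner binning still delivers~(\ref{cond2}) in the generic case where the realized leakage $\bar I_Z$ is \emph{strictly below} the designed confusion rate (so that the equivocation only improves). Handling this slack uniformly over the measure-one set $\mathcal{A}_M$ of realizations, rather than just at the worst-case strategy, is the delicate part; the monotonicity and full-duplex domination arguments above are precisely what let me sidestep a strategy-by-strategy secrecy recomputation.
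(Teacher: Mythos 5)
Your proposal is correct and takes essentially the same route as the paper: the \emph{encoding across blocks} scheme with $R_m=\mathbf{E}\left[\log\left(1+\frac{PH_m}{1+P_jH_z}\right)\right]$ pinned by the always-jamming adversary, a confusion rate $R_m-R_s=\mathbf{E}\left[\log(1+PH_e)\right]$ pinned by the always-eavesdropping adversary, and domination of every half-duplex strategy by these two extremes. The obstacle you flag at the end (concentration of the realized leakage under a causally adaptive adversary) is sidestepped in the paper's proof by a one-sided bound rather than a strategy-by-strategy recomputation: the per-block leakage satisfies $I(X^N(i);Z^N(i)\mid h_e(i))\leq N\log(1+Ph_e(i))$ whether the adversary jams (zero leakage) or eavesdrops, so the strong law of large numbers is applied to the i.i.d.\ sequence $\log(1+PH_e(i))$ — which does not depend on the adversary's decisions — and Fano's inequality absorbs the residual term $H(X^{NM}\mid Z^{NM},W,h_e^M)/NM$, giving the equivocation bound uniformly over strategies exactly as your full-duplex domination intends.
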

The proof of (\ref{nofeedback}) can be found at Section~\ref{section:nocsi}. 
Next, we present an upper bound for the secrecy rates achieved with the block-by-block encoding strategy.
\begin{theorem}
\label{upperbound}
Under the block-by-block encoding strategy, the achievable secrecy rate is upper bounded by
\begin{multline}
\label{eq:block_upper_bnd}
R_s^+ = \mathbf{E}\left[\log\left(1+\frac{PH_m}{1+P_jH_z}\right)-\log(1+PH_e)\right]^+ \\
\times P[H_z \leq h_z^*] 
\end{multline}
\end{theorem}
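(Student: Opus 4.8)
The plan is to upper bound the secrecy rate by weakening the transmitter's position in two independent ways and then combining the resulting losses multiplicatively. First I would relax the half-duplex adversary to a \emph{full-duplex} one that simultaneously jams and eavesdrops in every block; since this can only lower the achievable secrecy rate, any rate attained against it is a valid upper bound (this is the same relaxation invoked in the outline of Theorem~\ref{thm:main}). Against such an adversary the channel seen in block $i$ is a static Gaussian wiretap channel with legitimate signal-to-interference-plus-noise ratio $\frac{Ph_m(i)}{1+P_jh_z(i)}$ and eavesdropper SNR $Ph_e(i)$, while the block-by-block scheme is pinned to the fixed transmission rate $R(h_m(i))=\log\!\left(1+\frac{Ph_m(i)}{1+P_jh_z^*}\right)$ together with ARQ.

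Next I would isolate the throughput loss caused by connection outages. Under full-duplex jamming the instantaneous main-channel capacity is $\log\!\left(1+\frac{Ph_m(i)}{1+P_jh_z(i)}\right)$, which falls below the fixed rate $R(h_m(i))$ exactly when $h_z(i)>h_z^*$. Blocks with $h_z(i)>h_z^*$ therefore fail and, by the ARQ rule, convey no new message content and must be retransmitted. Hence the long-run fraction of transmitted blocks that actually deliver message bits is at most $P[H_z\le h_z^*]$, and this is the origin of the multiplicative factor in $R_s^+$.

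Then I would bound the secret information carried \emph{per delivered block}. Applying a Fano-type converse to the equivocation requirement~(\ref{cond2}), the net secret rate on the sub-sequence of delivering blocks cannot exceed the ergodic secrecy capacity of the underlying full-duplex wiretap channel, namely the no-CSI expression $\mathbf{E}\left[\log\!\left(1+\frac{PH_m}{1+P_jH_z}\right)-\log(1+PH_e)\right]^+$ established in Theorem~\ref{thm:nocsi}. Multiplying this per-block secrecy budget by the delivering-block fraction $P[H_z\le h_z^*]$ yields the claimed bound $R_s^+$.

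The hard part is making the multiplicative combination rigorous, because ARQ couples the blocks: the eavesdropper observes the failed (retransmitted) blocks as well, and the random set of delivering blocks is correlated with the channel states $(H_m,H_e,H_z)$, so the equivocation $\frac{1}{MN}H(W_s\mid Z^{MN},\mathbf{h}_A^M)$ does not factor cleanly into a throughput term times a per-block secrecy term. I would handle this with a genie argument that reveals both the jamming pattern and the indicator set of delivering blocks to the legitimate receiver and to the adversary; this decouples the reliability accounting from the secrecy accounting and lets me invoke the single-letter ergodic wiretap converse of Theorem~\ref{thm:nocsi} on the delivering sub-sequence while charging the non-delivering blocks only to throughput. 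Verifying that this genie-aided bound still collapses to the stated product $R_s^+$ — rather than to a tighter conditional expectation over $\{H_z\le h_z^*\}$ — is the step I expect to require the most care.
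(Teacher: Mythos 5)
Your first step runs in the wrong direction. Strengthening the half-duplex adversary to a full-duplex one can only \emph{decrease} the achievable secrecy rate: the achieved rate in (\ref{problem}) is a minimum over admissible (jam-or-eavesdrop) strategies, and the full-duplex rate sits below that minimum. Hence an upper bound proved against the full-duplex adversary bounds a quantity that is already smaller than the one in the theorem, and transfers nothing; only a rate \emph{achieved} against the full-duplex adversary would transfer (as a lower bound). The full-duplex relaxation in the outline of Theorem~\ref{thm:main} is used for the opposite purpose there (lower-bounding the constraint $C$ to certify feasibility), which is why it is valid in that context but not here. The paper's proof of Theorem~\ref{upperbound} never relaxes the adversary: it upper-bounds the equivocation $H(W|Z^n,H_m^M,H_e^M)$ directly by adding and subtracting $H(W|Z^n,Y^n,H_m^M,H_z^M,H_e^M)$ (small by Fano), then applies data processing and a per-block single-letterization (via \cite{lai2008}), and the term $\mathbf{E}\bigl[\log\bigl(1+\frac{PH_m}{1+P_jH_z}\bigr)-\log(1+PH_e)\bigr]^+$ enters as the Gaussian wiretap converse of \cite{leung1978} applied to $I(X^N(i);Y^N(i)|Z^N(i),\cdot)$, not as an assumption that the adversary jams and eavesdrops simultaneously. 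Relatedly, you invoke Theorem~\ref{thm:nocsi} as a per-block converse, but it is an achievability result; moreover its expression is $\left[\mathbf{E}[\cdot]\right]^+$ (positive part outside the expectation), which is in general strictly smaller than the $\mathbf{E}[\cdot]^+$ (positive part inside) that the bound (\ref{eq:block_upper_bnd}) requires, so it cannot serve as the per-block budget.

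Your genie also does not resolve the ARQ coupling you correctly identify as the hard part. Under plain ARQ the NACKed blocks carry retransmissions correlated with the message, so revealing the jamming pattern and the set of delivering blocks does not let you restrict the analysis to the delivered sub-sequence: the subtracted (Fano) entropy term moves the wrong way if you merely drop the NACKed observations from its conditioning. The paper's actual device is different and is the crux of the proof: replace retransmission by \emph{fresh independent bits} after each NAK (the key-sharing scheme of \cite{yara2011}) and observe that an upper bound on the achievable secret-\emph{key} rate is automatically an upper bound on achievable secrecy rates. The independence of the NACKed-block symbols then makes the passage from $(Z^n,Y^n)$ to $(Z^{n'},Y^{n'})$ on the successful blocks an exact equality in \emph{both} entropy terms, after which the per-block wiretap bound and the strong law of large numbers, $\frac{1}{M}\sum_{i=1}^M I_{H_z(i)\leq h_z^*} \to P[H_z\leq h_z^*]$, yield the multiplicative form of (\ref{eq:block_upper_bnd}) directly. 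So while you correctly anticipate the product structure and the ergodic-fraction factor, the two load-bearing steps — a valid converse argument in place of the full-duplex relaxation, and the independent-bits/secret-key reduction in place of the genie — are missing from your plan.
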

To find this upper bound, we employ the following strategy. When the transmitter receives a NAK signal, the transmitter sends an independent group of bits on the next block instead of retransmitting the previous packet~\cite{yara2011}. In~\cite{yara2011}, the authors use this scheme for the secret key sharing. The crucial observation is that an upper bound for the achievable secret key rate is also an upper bound for the achievable secrecy rates. The details of the proof can be found at Section~\ref{chap:upperbound}. Note that with the plain ARQ strategy described in Section~\ref{chap:ergodicformulation}, the achievable rate is identical to the expression provided in (\ref{eq:block_upper_bnd}), with $P[H_z \leq h_z^*]$ replaced with  $(P[H_z \leq h_z^*])^2$, and $H_z$ in the expectation term replaced with $h_z^{*}$ as shown in~\cite{rezki2012}.

By comparing the upper bound given in Theorem~\ref{upperbound}, we gain understanding on the performance of no CSI case. In particular, whenever the achievable rate with the no CSI case exceeds this bound, we know for sure that encoding across blocks (no CSI) is preferable over block-by-block encoding with CSI.
The main difference between block-by-block encoding and encoding across blocks is that, in the former, the unsuccessfully received packets are discarded, whereas in the no CSI case, all the information received by the receiver is used to decode the message.
The set of parameters for which this is the case is illustrated in Section~\ref{chap:numeric} in an example.

One can also write a general relationship between the performance with packet feedback and pilot feedback:
\begin{theorem}\label{comparison}
Any secrecy rate achievable with packet feedback strategy is also achievable with pilot feedback strategy.
\end{theorem}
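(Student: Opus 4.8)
The plan is to establish the ordering $R_s^{\text{Packet feedback}}\le R_s^{\text{Pilot feedback}}$ by a coding-scheme transfer argument: I take an arbitrary rate $R_s$ that is achievable under packet feedback and show that the \emph{same} transmitter scheme achieves $R_s$ under pilot feedback. The structural fact that makes this possible is that the transmitter and the legitimate receiver acquire perfect main CSI $h_m(i)$ in \emph{both} feedback modes; only the side information handed to the adversary changes, from $\mathbf{h}_A(i)=(h_e(i),h_m(i))$ under packet feedback to $\mathbf{h}_A(i)=(h_e(i),h_z(i))$ under pilot feedback. Hence the block-by-block encoder achieving $R_s$ under packet feedback is a legal transmitter strategy under pilot feedback, and the problem reduces to comparing the two worst-case adversaries for this fixed transmitter.

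First I would dispose of the reliability condition (\ref{cond1}). Since the rate $R(h_m(i))$, the decoder, the physical gains, and the ARQ mechanism are identical in the two modes, whether a block is decoded depends only on the realized jamming pattern together with the physical gain $h_z(i)$ relative to the design threshold $h_z^*$, and not on which quantity the adversary happens to observe. I would therefore fix a jamming/eavesdropping schedule and compare the two modes schedule-by-schedule, so that (\ref{cond1}) is inherited automatically.

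The heart of the argument is the secrecy condition (\ref{cond2}). For a fixed schedule, $Z^{MN}$ carries information about $W_s$ only in the eavesdropping blocks, where $Z^N(i)=G_e(i)X^N(i)+S_e^N(i)$, and extracting that information requires knowledge of the per-block codebook, i.e.\ of $R(h_m(i))$, equivalently of $h_m(i)$. Because $H_m,H_e,H_z$ are mutually independent and independent of the message, and because with a fixed schedule $Z^{MN}$ does not depend on $H_z$, the chain $W_s \to (Z^{MN},h_e^M,h_m^M) \to h_z^M$ holds. The data-processing inequality then gives
\begin{equation}
H\!\left(W_s \mid Z^{MN},h_e^M,h_z^M\right) \ \ge\ H\!\left(W_s \mid Z^{MN},h_e^M,h_m^M\right),
\end{equation}
so the pilot-feedback equivocation dominates the packet-feedback equivocation schedule-by-schedule. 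Combined with the inherited reliability, any $R_s$ feasible against the packet adversary is feasible against the matched pilot adversary.

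The main obstacle is that the two worst-case adversaries do \emph{not} use the same schedule: a pilot adversary observing $h_z(i)$ can target its jamming to blocks with $h_z(i)>h_z^*$, guaranteeing a connection loss with the least sacrifice of eavesdropping opportunities, whereas a packet adversary, ignorant of $h_z$, cannot. Thus the pilot adversary is the more efficient \emph{jammer} while the packet adversary is the more efficient \emph{eavesdropper}, and the schedule-by-schedule comparison must be upgraded to a comparison of the two optimized $\min_{I_J}$ problems. I would close this gap with a coupling argument: let the packet adversary internally draw a surrogate $\tilde h_z\sim f_{H_z}$, independent of everything it observes, run the pilot adversary's decision rule on $(h_e,\tilde h_z)$ to reproduce the pilot eavesdropping statistics, and exploit its genuine knowledge of $h_m$ to decode at least as well in every eavesdropped block. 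Showing that this surrogate packet adversary drives $R_s$ no higher than the pilot adversary---that is, that the eavesdropping value of main CSI dominates the jamming value of jammer CSI, which the transmitter has already pre-compensated through the choice of $h_z^*$ and the ARQ retransmissions---is the crux, and it is exactly the ergodic analogue of the full-duplex lower-bounding step used in the proof of Theorem~\ref{thm:main}.
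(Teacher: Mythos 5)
Your first three paragraphs are, in substance, the paper's own proof of Theorem~\ref{comparison}: since the transmitter and receiver obtain the same main CSI under both feedback modes, the packet-feedback scheme is a legal pilot-feedback scheme, and the secrecy comparison reduces to the single inequality $\frac{1}{n}H(W|Z^n,h_e^M)\geq \frac{1}{n}H(W|Z^n,h_e^M,h_m^M)$, which the paper gets from ``conditioning reduces entropy'' and you phrase as data processing on $W_s\to(Z^{MN},h_e^M,h_m^M)\to h_z^M$ --- equivalent routes. Two technical remarks on this part. First, the paper never conditions the pilot equivocation on $h_z^M$ at all (it \emph{defines} it as $\frac{1}{n}H(W|Z^n,h_e^M)$), which quietly sidesteps a weakness in your Markov-chain claim: under the plain ARQ scheme the retransmission pattern, and hence $X^{n}$ and $Z^{n}$, is correlated with $h_z^M$ through decoding failures, so ``$Z^{MN}$ does not depend on $H_z$ for a fixed schedule'' is not literally true. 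Second, where you wave at a ``schedule-by-schedule'' transfer, the paper does the actual measure-theoretic work that the almost-sure requirement in (\ref{cond2}) demands: it defines the section $\mathcal{A}_M(h_e^M)=\{h_m^M:(h_m^M,h_e^M)\in\mathcal{A}_M\}$, uses the independence of $H_m^M$ and $H_e^M$ to show $P[H_m^M\in\mathcal{A}_M(h_e^M)]=1$ for all $h_e^M$ in a set $\mathcal{E}$ with $P[\mathcal{E}]=1$, and then averages $H(W|Z^n,h_e^M,h_m^M)$ over $H_m^M$ to keep the bound $R_s-\epsilon$ on a probability-one set of $h_e^M$. Your proposal omits this transfer entirely.

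The genuine gap is your final paragraph. By your own admission it ends with the decisive step unproven --- ``showing that this surrogate packet adversary drives $R_s$ no higher than the pilot adversary \dots is the crux'' is essentially the theorem restated, so the proposal does not close. Worse, the coupling as sketched cannot close it: drawing $\tilde h_z\sim f_{H_z}$ independently of everything reproduces only the \emph{marginal} law of the pilot adversary's schedule and destroys exactly the correlation with the true $h_z$ (jam iff $h_z(i)>h_z^{*}$) that makes the pilot jammer effective, so the surrogate packet adversary is a strictly weaker jammer and no domination statement of the kind you need can follow from it; nor is there a ready ``full-duplex'' analogue to import, since in the proof of Theorem~\ref{thm:main} that relaxation lower-bounds the outage functional $C$, an object with no counterpart in the ergodic formulation. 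The paper needs none of this machinery because the obstacle you construct does not arise in its formulation: achievability via (\ref{cond1})--(\ref{cond2}) and the min in (\ref{problem}) is required to hold for \emph{every} adversary strategy and almost every CSI realization, reliability is preserved under any schedule because ARQ retransmits until success, and once the equivocation inequality is established realization-by-realization there is no residual comparison of two separately optimized worst-case adversaries left to perform. Deleting the coupling paragraph and supplying the almost-sure transfer argument above would turn your proposal into a complete proof.
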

Theorem~\ref{comparison}, shows that main CSI is more valuable for the adversary than the jamming CSI in the ergodic scenario, as was the case in the delay limited scenario.
The proof can be found at Section~\ref{chap:comparison}
\section{Numerical Evaluation}
\label{chap:numeric}
We first analyze the delay-limited case. We assume that both main and eavesdropper channels are characterized by block Rayleigh fading,
where the main channel and eavesdropper channel power
gains follow an exponential distribution with a mean 10 and
1, respectively\footnote{Such a difference may occur in the cellular setting, when the receiver is a base station with many antennas or in a wireless LAN setting, where the receiver is located at a favorable position for reception, compared to an external adversary.}. We also assume the jamming channel does not experience fading, where power gain is equal to 1. The transmission power, $P$, and the jamming power, $P_j$ are identical and chosen to be $1$. In Figure~\ref{fig:limited}, we plot the secrecy rate, $R_s$, as a function of the outage constraint threshold, $\alpha$ under the no CSI, the packet feedback, and the pilot feedback cases. The achievable rates in Figure~\ref{fig:ergodic} follow the same ordering as given in Theorem~(\ref{thm:main}).
 \begin{figure}[htbp]
   \centering
   \includegraphics[width=0.4\textwidth]{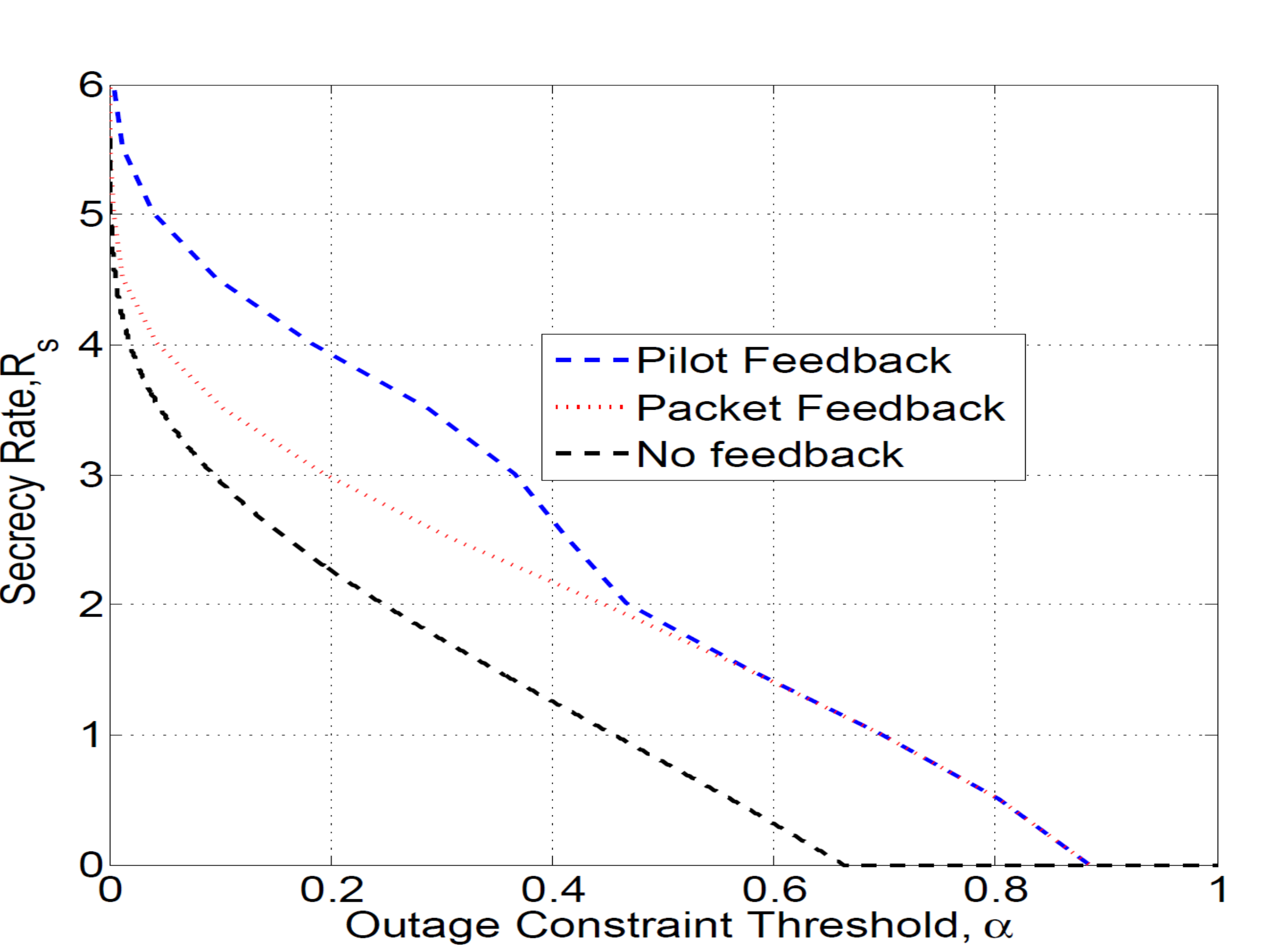}
   \caption{Delay Limited Scenario: Comparison of CSI feedback methods under the outage constraint.}
   \label{fig:limited}
 \end{figure}

Next, we simulate the ergodic scenario and compare the two strategies, encoding across blocks without CSI and block-by-block encoding with packet CSI feedback. We used the same power parameters as in the simulations for the delay-limited case. All three channels are assumed to be block Rayleigh-fading wtih $E[H_z] =1$
We select the encoding parameter, $h_z^{*}$ such that $P[H_z \leq h_z^{*}] = 0.75$.

In Figure~\ref{fig:ergodic}, we illustrate the region where the encoding across blocks outperforms the block-by-block encoding on the $\left(E[H_e],E[H_m]\right)$ space. The region to the left of the border, given in the plot contains the set of $\left(E[H_e],E[H_m]\right)$ for which the encoding across blocks results in a higher secrecy rate. The intuition behind this observation is that, when $E[H_m]$ is much larger than $E[H_e]$, the positive operator inside the upper bound, $R^{+}$ loses its significance. 
\begin{figure}[htbp]
   \centering
   \includegraphics[width=0.4\textwidth]{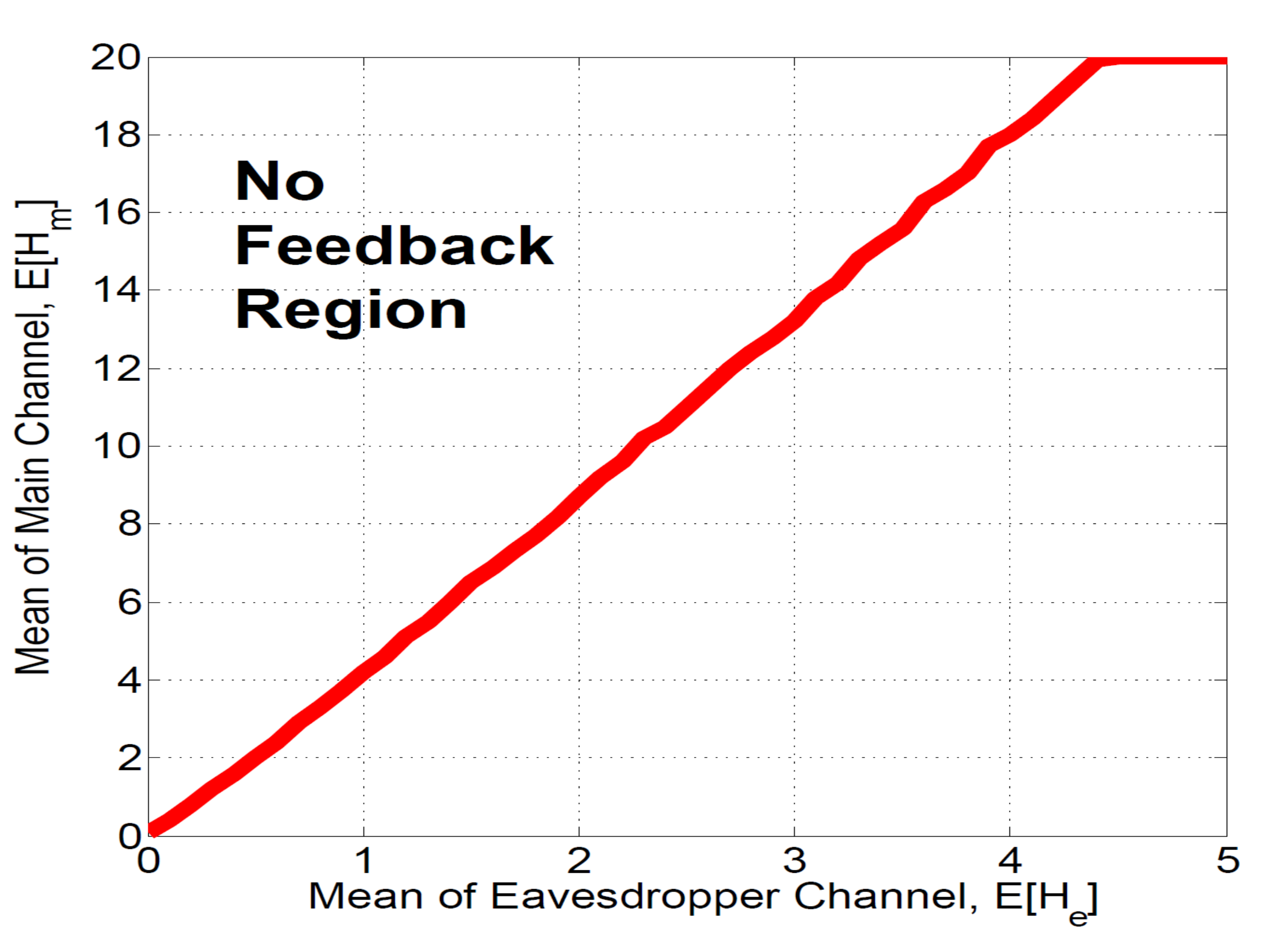}
   \caption{The region where encoding across blocks with no CSI outperforms block-by-block encoding with packet feedback.}
   \label{fig:ergodic}
 \end{figure}
\section{Conclusion}
\label{chap:conc}
We consider the wiretap channel model under the presence of half duplex adversary that is capable of either jamming or eavesdropping at a given time. We analyzed the achievable rates under a variety of scenarios involving different methods for obtaining transmitter CSI. In particular, we considered no CSI, CSI with packet based feedback, and CSI with pilot based feedback. Each method provides a different grade of information not only to the transmitter on the main channel, but also to the adversary on all channels. We show for the delay limited scenario that, the highest secrecy rate is achieved with the pilot based feedback. Similarly, in the ergodic case, we prove that the pilot-based CSI feedback outperforms the packet-based CSI feedback, however interestingly, in certain cases no CSI may lead to a higher achievable secrecy rates than with CSI.
\section{Proof of Theorem~\ref{thm:main}}
\label{chap:proof}
Feasible set for the problem (\ref{problem}) is defined to be
\begin{equation}
\mathcal{F} = \{(R_s, R(\cdot)) : C \geq \alpha \text{,   } \forall A_p\}
\end{equation}
\noindent where $A_p$ is the set of channel power gains such that the adversary is in the jamming state if $h_A(i)\in A_p$. 
We have the following lemma.
\begin{lemma}
If $\mathcal{F}\neq \emptyset$, the solution to (\ref{problem}) is identical for all strategies of the adversary.
\end{lemma}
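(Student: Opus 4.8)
The plan is to exploit the fact that the secrecy rate $R_s$ in the objective of (\ref{problem}) is functionally independent of the adversary's choice: the adversary enters the optimization only through the outage constraint (\ref{constraint2}), never through the objective itself. First I would fix an arbitrary transmitter strategy $(R_s,R(\cdot))$ and note that, since the power gains $\mathbf{H}$ are i.i.d.\ across blocks and every admissible adversary strategy is a measurable function $A_p$ of the instantaneous gains $h_A(i)$, the Ces\`aro average in (\ref{Threshold}) converges almost surely, by the strong law of large numbers, to the expectation $C=\mathbf{E}[I_J I_C+(1-I_J)I_S I_C]$ with $I_J=\mathbf{1}_{\{h_A\in A_p\}}$. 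The robust constraint ``$C\ge\alpha$ for all $A_p$'' then holds precisely when $C_{\min}=\min_{A_p}C\ge\alpha$, which is exactly the condition defining membership in $\mathcal{F}$; this also identifies problem (\ref{problem}) with the equivalent form $R_s^*=\max_{(R_s,R(\cdot))\in\mathcal{F}}R_s$.

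Next I would invoke the hypothesis $\mathcal{F}\neq\emptyset$ to select a transmitter strategy $(R_s^*,R^*(\cdot))\in\mathcal{F}$ attaining the optimal value. By the very definition of $\mathcal{F}$, this strategy satisfies $C\ge\alpha$ against \emph{every} admissible $A_p$ simultaneously, not merely against the worst case. Consequently, whichever strategy the adversary actually plays, the outage constraint remains met and the committed rate $R_s^*$ is delivered; the adversary's freedom to best-respond (recall it knows the transmitter's strategy a priori) cannot drive the achieved rate below $R_s^*$, and since the transmitter's committed rate is a fixed number it cannot exceed it either. Hence the achieved value equals $R_s^*$ for every adversary strategy, which is exactly the assertion that the solution of (\ref{problem}) is identical for all strategies of the adversary.

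The step I expect to be the main obstacle is the reduction in the first paragraph: one must justify that the pathwise, almost-sure long-run average in (\ref{Threshold}) collapses to the expectation $C$ uniformly over the whole class of admissible adversary strategies, and that the pointwise-optimal set $A_p^{\star}=\{h:g_1(h)<g_2(h)\}$, where $g_1=I_C$ and $g_2=I_S I_C$ are the two per-block success indicators appearing in (\ref{constraint2}), is itself an admissible strategy attaining $C_{\min}=\mathbf{E}[\min(g_1,g_2)]$. Once this worst-case characterization is secured, the robustness built into the $\mathcal{F}$-formulation makes the ``identical for all adversaries'' conclusion immediate; moreover it is precisely this characterization of $C_{\min}$ that the subsequent comparison of feasible sets across the no-CSI, packet-feedback, and pilot-feedback cases in the proof of Theorem~\ref{thm:main} will rely upon.
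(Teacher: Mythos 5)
Your proposal is correct and takes essentially the same route as the paper: both rest on rewriting the maximin problem (\ref{problem}) in the equivalent form $R_s^{*}=\max_{(R_s,R(\cdot))\in\mathcal{F}}R_s$, from which independence of the adversary's strategy is immediate because $\mathcal{F}$ already quantifies over all jamming sets $A_p$. The extra machinery you flag as the main obstacle (the SLLN reduction of the Ces\`aro average to $C_{\min}$ and the admissibility of the pointwise-optimal $A_p^{*}$) is exactly what the paper defers to the two subsequent lemmas rather than to this one, so including it here is harmless but not needed for the statement itself.
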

\begin{proof}
The equivalent form of (\ref{problem}) is as follows:
\begin{equation}
R_s^{*} = \max_{(R_s, R(\cdot))\in\mathcal{F}} R_s \label{equivalent}
\end{equation}
As seen from (\ref{equivalent}), $R_s^{*}$ does not depend on strategy of the adversary.
\end{proof}
However, the size of feasible set directly depends on the CSI feedback scheme. From (\ref{equivalent}), we can see that the solution to (\ref{problem}) is  directly proportional with the size of the feasible set. In the rest of the proof, we will use this observation to show the ordering.
\begin{lemma}
$R_s^{\text{No CSI}}\leq R_s^{\text{Packet Feedback}}$
\end{lemma}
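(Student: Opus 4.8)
The plan is to reduce the inequality to a containment of feasible sets, $\mathcal{F}^{\text{No CSI}} \subseteq \mathcal{F}^{\text{Packet Feedback}}$, and then invoke the equivalent form~(\ref{equivalent}): since $R_s^{*}$ is the maximum of $R_s$ over the feasible set, enlarging the feasible set can only increase (or leave unchanged) the optimum. The conceptual observation driving the containment is that the no CSI case is exactly the packet feedback case with the rate policy $R(\cdot)$ constrained to be constant: without main CSI the transmitter cannot make the rate depend on $H_m$, so its admissible rate policies are precisely the constant ones, which form a subset of the rate policies $R(H_m)$ available under packet feedback.

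First I would compute the worst-case constraint value $C_{\min} = \min_{A_p} C$ in each case, since membership in $\mathcal{F}$ is equivalent to $C_{\min} \geq \alpha$. I would characterize the minimizing half-duplex adversary block by block: for each value of the CSI it observes, it compares the probability of avoiding a connection outage while jamming against the indicator of avoiding a secrecy outage while eavesdropping, and chooses whichever contributes less to $C$. Using the independence of $H_e$ from $(H_m, H_z)$ together with the nesting $\{\log(1+\frac{PH_m}{1+P_jH_z}) \geq R\} \subseteq \{\log(1+PH_m)\geq R\}$, the resulting pointwise minimum collapses, yielding the clean joint-probability expressions for $C_{\min}^{\text{No CSI}}$ and $C_{\min}^{\text{Packet Feedback}}$ quoted in the outline, which differ only in that the constant $R$ is replaced by the function $R(H_m)$.

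With these in hand the containment is immediate. Take any $(R_s, R) \in \mathcal{F}^{\text{No CSI}}$; here $R$ is constant and $C_{\min}^{\text{No CSI}} \geq \alpha$. Adopting the constant rate policy $R(\cdot) \equiv R$ under packet feedback makes $C_{\min}^{\text{Packet Feedback}}$ coincide with $C_{\min}^{\text{No CSI}}$, hence it is still at least $\alpha$, so $(R_s, R) \in \mathcal{F}^{\text{Packet Feedback}}$. This proves $\mathcal{F}^{\text{No CSI}} \subseteq \mathcal{F}^{\text{Packet Feedback}}$, and the inequality $R_s^{\text{No CSI}} \leq R_s^{\text{Packet Feedback}}$ follows from~(\ref{equivalent}).

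The step I expect to be the main obstacle is the derivation of the $C_{\min}$ expressions, and in particular the verification that the extra observation available to the packet-feedback adversary (the gain $h_m$) does not shrink the feasible set when the rate is held constant. A priori, since the packet-feedback adversary observes strictly more than the no-CSI adversary, its strategy space is larger and one might fear $C_{\min}^{\text{Packet Feedback}} < C_{\min}^{\text{No CSI}}$ for the same constant rate, which would break the containment. The resolution is that, with a constant rate, the adversary's optimal jam-versus-eavesdrop decision turns only on whether a secrecy outage occurs, which is a function of $h_e$ alone; the independence structure then forces the two worst-case values to be equal. Making this collapse precise, rather than merely bounding one $C_{\min}$ by the other, is the crux of the argument.
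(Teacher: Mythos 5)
Your proposal is correct and follows essentially the same route as the paper: compute the worst-case constraint $C_{\min}$ for each case via the block-by-block optimal jam/eavesdrop decision, observe that the two expressions differ only in replacing the constant $R$ by $R(H_m)$, and conclude $\mathcal{F}^{\text{No CSI}} \subseteq \mathcal{F}^{\text{Packet Feedback}}$ so that the maximization in~(\ref{equivalent}) yields the ordering. Your explicit verification that the packet-feedback adversary's extra knowledge of $h_m$ does not shrink the feasible set under a constant rate policy is precisely the content the paper compresses into ``it is easy to see,'' so you have if anything filled in the one step the paper leaves implicit.
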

\begin{proof}
In the no feedback case, constraint term (\ref{constraint2}) is reduced to
\allowdisplaybreaks
\begin{align}
&C=\lim_{M\to \infty} 1/M\sum_{i=1}^M I_J (i) I_{\log\left(1+\frac{PH_m(i)}{1+P_j H_z(i)}\right) \geq R}\nonumber\\
& +(1-I_J (i)) I_{\log\left(1+PH_e(i)\right) \leq R-R_s}I_{\log\left(1+PH_m(i)\right) \geq R}\\
&=\lim_{M\to \infty} 1/M\sum_{i=1}^M I_{H_e(i)\in A_p} I_{\log\left(1+\frac{PH_m(i)}{1+P_j H_z(i)}\right) \geq R}\nonumber\\
&+I_{H_e(i)\not\in A_p} I_{\log\left(1+PH_e(i)\right) \leq R-R_s}I_{\log\left(1+PH_m(i)\right) \geq R} \label{noCSI}\\
&= E\left[I_{H_e\in A_p} I_{\log\left(1+\frac{PH_m}{1+P_j H_z}\right) \geq R}\right.\nonumber\\
&\left.+I_{H_e\not\in A_p} I_{\log\left(1+PH_e\right) \leq R-R_s}I_{\log\left(1+PH_m\right) \geq R}\right] \text{, w.p. 1}\label{SLL}\\
&=\int_{h_e} E\left[I_{\log\left(1+\frac{PH_m}{1+P_j H_z}\right) \geq R} I_{H_e \in A_p}\right.\nonumber\\ 
&\;+\left.I_{\log\left(1+PH_e\right) \leq R-R_s}I_{\log\left(1+PH_m\right) \geq R} I_{H_e \notin A_p}\mid H_e=h_e\right]\nonumber\\
&\qquad\qquad\qquad \qquad  \qquad\qquad\qquad f_{H_e}(h_e)\,d h_e \text{, w.p. 1}\\
&=\int_{h_e} E\left[I_{\log\left(1+\frac{PH_m}{1+P_j H_z}\right) \geq R}\right] I_{h_e \in A_p}\nonumber\\  
&\qquad\qquad+E[I_{\log\left(1+PH_m\right) \geq R}] I_{\log\left(1+Ph_e\right) \leq R-R_s} I_{h_e \notin A_p}\nonumber \\
&\qquad\qquad \qquad \qquad\qquad \qquad \qquad f_{H_e}(h_e)\,d h_e \text{, w.p. 1} \label{strategy}
\end{align}
where  (\ref{noCSI}) follows from the fact that we have $I_J (i)=I_{H_e(i)\in A_p}$ since the adversary only knows the instantaneous power gain of the eavesdropper channel, (\ref{SLL}) follows  from the strong law of large numbers theorem and   (\ref{strategy}) follows from the independence of $H_e$, $H_m$, and $H_z$.

\noindent$A_p^*$ set that minimizes $C$ is as follows:
\begin{equation}
A_p^{*} = \{h_e \; :  \log\left(1+Ph_e\right)\leq R-R_s\}
\end{equation}
The minimized $C$ for a given rate pair ($R_s,R$) is
\begin{align}
&C^{\text{No CSI}}=\nonumber\\
&\mathbf{P}\left[R_s +\log\left(1+PH_e\right)\leq R\leq \log\left(1+\frac{PH_m}{1+P_j H_z}\right)\right]\nonumber
\end{align}
 
\noindent Feasible set for the transmitter for the no CSI feedback case is as follows
\begin{equation}
F^{\text{No CSI}} = \{ (R_s,R) \; :  C^{\text{No CSI}} \geq \alpha\}
\end{equation}
We now show that $F^{\text{No CSI}} \subset F^{\text{Packet Feedback}}$. For the packet feedback case, channel encoding rate is a function of the main CSI, $H_m$. The adversary knows the instantaneous power gains of the eavesdropping channel and the main channel. Constraint term in (\ref{constraint2}) is reduced to
\allowdisplaybreaks
\begin{align}
&C=\lim_{M\to \infty}1/M \sum_{i=1}^M I_J (i) I_{\log\left(1+\frac{PH_m(i)}{1+P_j H_z(i)}\right) \geq R(H_m(i))}\nonumber \\  
&+(1-I_J (i)) I_{\log\left(1+PH_e(i)\right) \leq R(H_m(i))-R_s}\nonumber\\
&\qquad\qquad\qquad\qquad\qquad\qquad I_{\log\left(1+PH_m(i)\right) \geq R(H_m(i))}\\
&=\lim_{M\to \infty}1/M \sum_{i=1}^M I_{\left(H_m(i),H_e(i)\right)\in A_p} I_{\log\left(1+\frac{PH_m(i)}{1+P_j H_z(i)}\right) \geq R(H_m(i))}\nonumber\\ 
&\qquad+I_{\left(H_m(i),H_e(i)\right)\not\in A_p} I_{\log\left(1+PH_e(i)\right) \leq R(H_m(i))-R_s} \nonumber\\
&\qquad\qquad\qquad\qquad\qquad\qquad I_{\log\left(1+PH_m(i)\right) \geq R(H_m(i))}  \label{PacketCSI}\\
&= E\left[I_{H_m,H_e\in A_p} I_{\log\left(1+\frac{PH_m}{1+P_j H_z}\right) \geq R(H_m)}\right.\nonumber\\
&\left. +I_{H_m,H_e\not\in A_p} I_{\log\left(1+PH_e\right) \leq R(H_m)-R_s}I_{\log\left(1+PH_m\right) \geq R}\right],\nonumber\\
&\qquad\qquad\qquad\qquad\qquad\qquad\qquad\qquad\qquad \text{w.p. 1}\label{SLL2}\\
&=\int_{h_m,h_e} E\left[I_{\log\left(1+\frac{Ph_m}{1+P_j H_z}\right) \geq R(h_m)}\right] I_{h_m,h_e \in A_p} \nonumber\\
& + I_{\log\left(1+Ph_e\right) \leq R(h_m)-R_s} I_{\log\left(1+Ph_m\right) \geq R(h_m)}I_{h_m,h_e \notin A_p} \nonumber\\
&\qquad\qquad \qquad f_{H_e}(h_e)f_{H_m}(h_m)\,d h_e d h_m \text{, w.p. 1} \label{strategy2}
\end{align} 
where (\ref{PacketCSI}) follows from the facet that $I_J (i)=I_{H_m(i),H_e(i)\in A_p}$ since the adversary knows both the instantaneous power gain of the eavesdropper channel and the main channel. (\ref{SLL2}) follows from the strong law of large numbers theorem and (\ref{strategy2}) follows from the independence of $H_e$, $H_m$, and $H_z$.
\noindent$A_p^*$ set that minimizes $C$ is as follows:
\begin{align}
A_p^{*} &= \{h_m,h_e \; :  \log\left(1+Ph_e\right)\leq R(h_m)-R_s\}\nonumber\\ 
             &\bigcup \{h_m,h_e \; :  \log\left(1+Ph_m\right)\leq R(h_m)\}  \label{setmin2}       
\end{align}
When we combine (\ref{strategy2}) and (\ref{setmin2}), the minimized can be $C^{\text{Packet Feedback}}$ written as
\begin{align}
&C^{\text{Packet Feedback}}=\nonumber\\
& P\left[R_s +\log\left(1+PH_e\right)\leq R(H_m)\leq \log\left(1+\frac{PH_m}{1+P_j H_z}\right)\right]\nonumber.
\end{align}
 
\noindent Feasible set for the transmitter for the packet feedback case is as follows
\begin{equation}
F^{\text{Packet Feedback}} = \{ (R_s,R(\cdot)) \; :  C^{\text{Packet Feedback}} \geq \alpha\}
\end{equation}

\noindent It is easy to see that $F^{\text{No CSI}} \subset F^{\text{Packet Feedback}}$ then we have $R_s^{\text{No CSI}}\leq R_s^{\text{Packet Feedback}}$. 
\end{proof}
\begin{lemma}
$R_s^{\text{Packet Feedback}}\leq R_s^{\text{Pilot Feedback}}$
\end{lemma}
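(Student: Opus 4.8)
The plan is to reuse the feasible-set comparison that drove the previous two lemmas. Since $R_s^{*} = \max_{(R_s,R(\cdot))\in\mathcal{F}} R_s$, a pointwise-larger minimized constraint enlarges the feasible set and therefore can only raise the optimal rate. Hence it suffices to show, for every rate pair $(R_s,R(\cdot))$, that $C_{\min}^{\text{Packet Feedback}} \le C_{\min}^{\text{Pilot Feedback}}$; this yields $F^{\text{Packet Feedback}}\subseteq F^{\text{Pilot Feedback}}$ and the claimed ordering. I would avoid computing $C_{\min}^{\text{Pilot Feedback}}$ directly, because there the adversary bases its jamming decision on $(h_e,h_z)$ while $R(H_m)$ stays random from its viewpoint, making the minimizing set awkward. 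Instead I would lower-bound it through a stronger hypothetical \emph{full-duplex} adversary.

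First I would introduce a full-duplex adversary that both jams and eavesdrops in every block, so that a packet survives only when it is simultaneously free of a secrecy outage (eavesdropping present) and of a connection outage under jamming, giving the induced constraint
\begin{equation}
C^{\text{FD}} = P\left[R_s + \log\left(1+PH_e\right) \le R(H_m) \le \log\left(1+\frac{PH_m}{1+P_jH_z}\right)\right].
\end{equation}

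Next I would establish the pointwise domination $C^{\text{FD}} \le C(A_p)$ for \emph{every} half-duplex strategy $A_p$, hence $C^{\text{FD}} \le C_{\min}^{\text{Pilot Feedback}}$. Fix a realization $(h_m,h_e,h_z)$ and compare good-packet indicators. When the half-duplex adversary jams, its indicator is $I_{\log(1+Ph_m/(1+P_jh_z))\ge R(h_m)}$, which dominates the full-duplex integrand because the latter carries the extra secrecy factor. When it eavesdrops, its indicator is $I_{\log(1+Ph_e)\le R(h_m)-R_s}\,I_{\log(1+Ph_m)\ge R(h_m)}$, and the monotonicity $\tfrac{Ph_m}{1+P_jh_z}\le Ph_m$ gives $I_{\log(1+Ph_m/(1+P_jh_z))\ge R(h_m)} \le I_{\log(1+Ph_m)\ge R(h_m)}$, so the half-duplex indicator again dominates. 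Taking expectations yields $C(A_p)\ge C^{\text{FD}}$ for every $A_p$; since this holds irrespective of which CSI the strategy is built on, minimizing over the pilot-feedback strategy space preserves the bound.

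Finally I would invoke the computation from the proof of the preceding lemma, which established exactly $C_{\min}^{\text{Packet Feedback}} = C^{\text{FD}}$: in the packet case the adversary observes $h_m$ and $h_e$, precisely the quantities governing the two outage events, so its optimal half-duplex strategy already attains the full-duplex lower bound. Chaining the two facts gives $C_{\min}^{\text{Packet Feedback}} = C^{\text{FD}} \le C_{\min}^{\text{Pilot Feedback}}$ for every $(R_s,R(\cdot))$, whence $F^{\text{Packet Feedback}}\subseteq F^{\text{Pilot Feedback}}$ and $R_s^{\text{Packet Feedback}} \le R_s^{\text{Pilot Feedback}}$. I expect the one delicate step to be the pointwise indicator domination in the eavesdropping case, where one must use that jamming can only lower the received SNR to relate the with-jamming and no-jamming connection events; the rest is the feasible-set monotonicity already set up earlier.
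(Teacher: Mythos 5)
Your proposal is correct and follows essentially the same route as the paper: the paper also lower-bounds the pilot-feedback constraint by a hypothetical full-duplex adversary (setting $I_J(i)=I_E(i)=1$ for all $i$, via the enlarged outage set $\mathcal{O}^{\text{Upper Bound}}$), observes that this lower bound coincides with $C^{\text{Packet Feedback}}$, and concludes $\mathcal{F}^{\text{Packet Feedback}}\subseteq \mathcal{F}^{\text{Pilot Feedback}}$. Your pointwise indicator-domination argument, including the use of $\frac{Ph_m}{1+P_jh_z}\leq Ph_m$ to absorb the no-jamming connection event, is just a more explicit rendering of the paper's set-inclusion step.
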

\begin{proof}
We will show that for any given $(R_s,R(\cdot))$,  $C^{\text{Packet Feedback}}$ is a lower bound for (\ref{constraint2}) in  the channel feedback cases. Constraint (\ref{constraint2}) can be written as
\begin{equation}
C=\lim_{M\to \infty} 1/M \sum_{i=1}^{M} \left[1- I_{H_m(i),H_e(i),H_z(i) \in \mathcal{O}}\right]\label{newconst}
\end{equation}
where  \allowdisplaybreaks
\begin{align}
&\mathcal{O}=\{h_m(i),h_e(i),h_z(i): I_J (i) I_{\log\left(1+\frac{Ph_m(i)}{1+P_j h_z(i)}\right) \leq R(h_m(i))} =1\}\nonumber\\
&\bigcup \{h_m(i),h_e(i),h_z(i): I_E (i) I_{\log\left(1+Ph_e(i)\right) \geq R(h_m(i))-R_s}=1\} \nonumber \\
&\bigcup \{h_m(i),h_e(i),h_z(i): I_E (i) I_{\log\left(1+Ph_m(i)\right)\leq R(h_m(i)) }=1\} 
\end{align}
 Constraint $C$ decreases as the siz of set $\mathcal{O}$ increases. Let's define an upper bound for $\mathcal{O}$:  
\begin{align} \allowdisplaybreaks
&\mathcal{O}^{\text{Upper Bound}}\nonumber\\
&=\{h_m(i),h_e(i),h_z(i): I_{\log\left(1+\frac{Ph_m(i)}{1+P_j H_z(i)}\right) \leq R(h_m(i))} =1\}\nonumber\\
&\bigcup \{h_m(i),h_e(i),h_z(i): I_{\log\left(1+Ph_e(i)\right) \geq R(h_m(i))-R_s}=1\} \nonumber \\
&\bigcup \{h_m(i),h_e(i),h_z(i): I_{\log\left(1+Ph_m(i)\right)\leq R(h_m(i)) }=1\}
\end{align}
where 
$I_J(i)=I_E(i)=1, \forall i\in\mathbb{N}$. Then, we find a lower bound for $C$ by putting $\mathcal{O}^{\text{Upper Bound}}$ in (\ref{newconst}):
\begin{align}
&C^{\text{Lower Bound}}\nonumber\\
&=\lim_{M\to \infty} \sum_{i=1}^{M} I_{\log\left(1+\frac{PH_m(i)}{1+P_j H_z(i)}\right) \geq R(H_m(i)) \geq \log\left(1+PH_e(i)\right) +R_s}\nonumber\\
&= C^{\text{Packet CSI}} \label{slw3}
\end{align}
Since $C^{\text{Lower Bound}}=C^{\text{Packet Feedback}}$,  we have $\mathcal{F}^{\text{Lower Bound}} = \mathcal{F}^{\text{Packet Feedback}} \subset \mathcal{F}^{\text{Pilot Feedback}}$.
\end{proof}
\allowdisplaybreaks
\section{Proof of Theorem~\ref{thm:nocsi}} \label{section:nocsi}
We employ the \emph{encoding across block} strategy explained in Section~\ref{subsec:encoding}, where $R_m \triangleq \mathbf{E}\left[\log\left(1+\frac{PH_m}{1+P_jH_z}\right)\right]$. If $R_m<\mathbf{E}\left[\log\left(1+\frac{PH_m}{1+P_jH_z}\right)\right]$, the adversary prevents the reliable communication by jamming at every block. Note that in no CSI case, the adversary only obtains $h_e(i)$ so equivocation rate is defined as $\frac{H(W|Z^{NM}, h_e^M)}{NM}$.   Equivocation analysis for the \emph{encoding across block} is as follows. \allowdisplaybreaks
\begin{align}
&H(W|Z^{NM}, h_e^M) \nonumber \\
&= H(W, X^{NM}|Z^{NM},h_e^M)-H(X^{NM}|Z^{NM}, W, h_e^M)\nonumber\\
&  = H(X^{NM}|Z^{NM}, h_e^M)+H(W|X^{NM},Z^{NM}, h_e^M)\nonumber\\
&\qquad\qquad\qquad\qquad\qquad\qquad-H(X^{NM}|Z^{NM}, W, h_e^M)\nonumber\\
&\geq H(X^{NM}|Z^{NM}, h_e^M)+H(X^{NM}|Z^{NM}, W, h_e^M)\nonumber\\
&= H(X^{NM}|h_e^{M})-I(X^{NM}, Z^{NM}|h_e^M)\nonumber\\
&\qquad\qquad\qquad\qquad\qquad\qquad+H(X^{NM}|Z^{NM}, W, h_e^M)\nonumber\\
&\stackrel{(a)}{=}MN R_m - I(X^{NM}, Z^{NM}|h_e^M)\nonumber \\
&\qquad\qquad\qquad\qquad\qquad\qquad-H(X^{NM}|Z^{NM}, W, h_e^M)\nonumber\\
&\geq MNR_m- N\sum_{i=1}^{M}I(X^{N}(i), Z^{N}(i)|h_e(i))\nonumber\\
&\qquad\qquad\qquad\qquad\qquad\qquad-H(X^{NM}|Z^{NM}, W, h_e^M)\nonumber\\
&\geq MN R_m - N\sum_{i=1}^M \log(1+Ph_e(i))\nonumber\\
&\qquad\qquad\qquad\qquad\qquad\qquad -H(X^{NM}|Z^{NM}, W, h_e^M) \nonumber
\end{align}
where $(a)$ follows from the fact that codeword $X^{NM}$ is uniformly distributed over a set of size $2^{NMR_m}$. We continue with the following steps.
\begin{align}
&\frac{H(W|Z^{NM}, h_e^M)}{NM} \nonumber\\
&\geq R_m-\frac{\sum_{i=1}^M \log(1+Ph_e(i))}{M}-\frac{H(X^{NM}|Z^{NM}, W, h_e^M)}{NM}\nonumber\\
&\stackrel{(b)}{\geq}R_m- E\left[\log(1+PH_e)\right]-\epsilon_1\nonumber \\
&\qquad\qquad\qquad\qquad-\frac{H(X^{NM}|Z^{NM}, W, h_e^M)}{NM}, \nonumber\\
&\stackrel{(c)}{\geq}R_m- E\left[\log(1+PH_e)\right]-\epsilon_1-\epsilon_2\nonumber\\
&=R_s -\epsilon,\nonumber
\end{align}
where $\epsilon = \epsilon_1+\epsilon_2$. Here, $(b)$ is satisfied for any $\epsilon_1>0$ and $h_e^M\in A_M$ with $Pr[A_M]=1$ and $M\geq M(\epsilon_1)$ since $$\lim_{M\rightarrow \infty}\frac{1}{M}\sum_{i=1}^M\log(1+PH_e(i))=E\left[\log(1+PH_e)\right]$$\text{with probability 1}, $(c)$ follows from the Fano's inequality. Let's define $R_e\triangleq R_m-R_s$ and $P_e^{NM} \triangleq P[X^{NM} \neq \hat{X}^{NM}]$ where $\hat{X}^{NM}=g(Z^{NM},h_e^M, W)$ is the estimation of the codeword $X^{NM}$.
\begin{align}
&\frac{H(X^{NM}|Z^{NM}, W, h_e^M)}{NM} \leq P_e^{NM} R_e +\frac{H(P_e^{NM})}{NM}\\
&\qquad\qquad\qquad\qquad\qquad                 \leq \epsilon_2 \label{fano}
\end{align}
Here, any $\epsilon_2>0$, (\ref{fano}) is satisfied for sufficiently high $N$ and $M$. The reason is that since $R_e=I(X^{N},Z^{N}|H_e)$, $P_e^{NM} \to 0$ for the sequence of codes ($2^{NMR_e}, R_e$) as $M \to \infty$.
\section{Proof of Theorem~\ref{upperbound}}
\label{chap:upperbound}
We first show that if $R_s$ is an achievable secrecy  rate under the packet feedback strategy, we have 
for any $\epsilon > 0, \frac{1}{NM}H(W|Z^{NM},H_m^M, H_e^M) \geq R_s-\epsilon, \forall N\geq N(\epsilon), \forall M\geq M(\epsilon)$. Note that here, the message $W$ is conditioned on random vectors, $H_e^M$ and $H_m^M$.
\allowdisplaybreaks
\begin{align}
& \frac{1}{NM} H(W| Z^NM, H_{m}^M, H_{e}^M) \\
&= \int_{ \mathcal{A}_M}\frac{1}{NM} H(W| Z^NM, h_{M}^M, h_{e}^M) f_{H_m^M,H_e^M} (h_m^M, h_e^M) \;dh_m^M \;dh_e^M\nonumber\\
&\geq \int_{\mathcal{A}_M}(R_s -\epsilon) f_{H_m^M,H_e^M} (h_m^M, h_e^M) \;dh_m^M \;dh_e^M\label{definition}\\
&= R_s-\epsilon \label{pr1}
\end{align}
where 
$\mathcal{A}_M$ is the set defined in Section~\ref{chap:ergodicformulation}. Here, (\ref{definition}) follows from the definition of achievability, and (\ref{pr1})follows from the fact that $P[\mathcal{A}_M]=1$.
We consider the following case for the rest of the proof.
When the transmitter receives a NACK signal, on the next block, the transmitter sends an independent group of bits instead of retransmitting the previous packet~\cite{yara2011}. In~\cite{yara2011}, the authors use this scheme for the secret key sharing. The crucial observation is that an upper bound for an achievable secret key rate is also an upper bound for achievable secrecy rates.

We define index set F that contains the indexes of blocks on which the transmitted codeword is successfully decoded. Suppose that size of F is $M'$, $n' \triangleq NM'$, and $n \triangleq NM$. We  now  prove that $\frac{1}{n}H(W|Z^n,H_m^M, H_e^M) \leq R_s^+ \text{ as  } n\to \infty$. 
\allowdisplaybreaks
\begin{align}
&H(W|Z^n,H_m^M, H_e^M) \nonumber\\
  &\stackrel{(a)}{\leq}H(W|Z^n,H_m^M, H_e^M) \nonumber\\
&\;\;\;-H(W|Z^n,Y^n,H_m^M,H_z^M, H_e^M)+n\delta_n\nonumber\\
     &\stackrel{(b)}{=} H(W|Z^{n'},H_m^{M'}, H_e^{M'})\nonumber\\
&\;\;\;-H(W |Z^{n'},Y^{n'},H_m^{M'},H_z^{M'}, H_e^{M'})+n\delta_n  \nonumber\\
&= I(W;Y^{n'}, H_z^{n'}|Z^{n'},H_m^{M'}, H_e^{M'})+\delta_n \nonumber\\
&\stackrel{(c)}{\leq} I(X^{n'};Y^{n'}, H_z^{M'}|Z^{n'},H_m^{M'}, H_e^{M'})+\delta_n \nonumber\\
&= I(X^{n'}; H_z^{M'}|Z^{n'},H_m^{M'}, H_e^{M'}) \nonumber\\
&\;\;\;+I(X^{n'};Y^{n'} |Z^{n'},H_m^{M'}, H_e^{M'}, H_z^{M'}) +n\delta_n \nonumber\\
&\stackrel{(d)}{=}I(X^{n'};Y^{n'} |Z^{n'},H_m^{M'}, H_e^{M'}, H_z^{M'})  \nonumber\\
&\stackrel{(e)}{=}\sum_{i=1}^M I(X^N(i);Y^N(i)|Z^N(i), H_z(i),H_m(i),H_e(i)) I(i)\nonumber \\
&\qquad \qquad \qquad \qquad \qquad \qquad\qquad \qquad \qquad \qquad  +n\delta_n \nonumber\\
&\stackrel{(f)}{\leq}\sum_{i=1}^M N\mathbf{E}\left[\log\left(1+\frac{PH_m}{1+P_jH_z}\right)-\log(1+PH_e)\right]^+ I(i)\nonumber\\
& \qquad \qquad \qquad \qquad \qquad \qquad\qquad \qquad \qquad \qquad  +n\delta_n \nonumber
\end{align}
where $I(i)= I_{H_z(i)\leq h_z^{*}}$. Here, (a) follows from Fano's inequality (b) follows from the independent choice of the codeword
symbols transmitted in each block that does not
allow the eavesdropper to benefit from the observations corresponding to the previous NACKed blocks, (c) results from the data processing inequality, (d) follows from the independence of $X^{n'}$ and $H_z^{M'}$, (e) follows from~\cite{lai2008}, and (f) follows from~\cite{leung1978}. 
\begin{align}
 &\frac{1}{n}H(W|Z^n,H_m^M, H_e^M)\\
      &\leq \mathbf{E}\left[\log\left(1+\frac{PH_m}{1+P_jH_z}\right)-\log(1+PH_e)\right]^+\nonumber\\
& \qquad \qquad \qquad \qquad \qquad \qquad \times\frac{1}{M}\sum_{i=1}^M \mathbf{I(i)}+\delta_n\\
&R_e \stackrel{(g)}{\leq}
\alpha\mathbf{E}\left[\log\left(1+\frac{PH_m}{1+P_jH_z}\right)-\log(1+PH_e)\right]^+\nonumber
\end{align}
(g) follows from the fact that $\delta_n \to 0 \text{ as } N,M \to \infty$ and from the ergodicity of channels such that  $\frac{1}{M}\sum_{i=1}^M I(i) \to P[H_z\leq h_z^{*}] \text{ as } M \to \infty$.
\section{Proof of Theorem~\ref{comparison}}
\label{chap:comparison}

Suppose that $R_s$ is a secrecy rate achieved with the packet based strategy and $n\triangleq NM$. Notice that the equivocation rates for the pilot and packet feedbacks are defined as $\frac{1}{n}H(W|Z^n, h_e^M)$ and $\frac{1}{n}H(W|Z^n, h_e^M, h_m^M)$, respectively.  Since $R_s$ is an achievable rate with the packet based strategy, by definition, for any $\epsilon > 0$ there exists $N(\epsilon)$ and $M(\epsilon)$ such that for $N \geq N(\epsilon)$ and $M \geq M(\epsilon)$, we have  $\frac{1}{n}H(W|Z^n,h_e^M, h_m^M) \geq R_s-\epsilon$,    $\forall (h_e^M, h_m^M) \in \mathcal{A}_M$  with $P(\mathcal{A}_M) =1$.

We define $\mathcal{A}_M(h_e^M) = \{h_m^M: (h_m^M, h_e^M)\in \mathcal{A}_M\}$. Since $H_m^M$ and $H_e^M$ are independent random vectors and $P[ (H_m^M, H_e^M) \in\mathcal{A}_M] =1$, we have $P[H_m^M \in \mathcal{A}_M(h_e^M)] =1, \forall h_e^M \in \mathcal{A}_M$. To observe that 
\begin{align}
1 &= \int_{\mathcal{A}_M} f_{H_m^M,H_e^M} (h_m^M, h_e^M) \;dh_m^M \;dh_e^M\\
   &= \int_{h_e^M } f_{H_e^M} (h_e^M)\int_{h_m^M \mathcal{A}_M(h_e^M)} f_{H_m^M} (h_m^M) \;dh_m^M \;dh_e^M\\
   & = \int_{h_e^M } P[ H_m \in \mathcal{A}_M(h_e^M)] f_{H_e^M} (h_e^M) \;dh_e^M =1
\end{align}
We can see that $P[ H_m \in \mathcal{A}_M(h_e^M)] =1$,  $\forall h_e^M \in \mathcal{E}$ such that  $P[\mathcal{E}]=1$. We now prove the lemma with following inequalities.
\begin{align}
&\frac{1}{n}H(W|Z^n, h_e^M)  \stackrel{(a)}{\geq} \frac{1}{n}H(W|Z^n,h_e^M, H_m^M)\\
&=\int_{\mathcal{A}_M(h_e^M)} \frac{1}{n}H(W|Z^n,h_e^M, h_m^M) f_{H_m^M}(h_m^M) \;dh_m^M \\
&\stackrel{(b)}{\geq}\int_{\mathcal{A}_M(h_e^M)}  (R_s-\epsilon)  f_{H_m^M}(h_m^M)   \;dh_m^M \\
&\stackrel{(c)}{=} R_s-\epsilon,\qquad \forall h_e^M \in \mathcal{E} \text{   with  } P[\mathcal{E}]=1.
\end{align}
$(a)$ follows from the fact that  conditioning reduces the entropy, $(b)$ follows from the fact that  since $h_m^M \in \mathcal{A}_M(h_e^M)$,  $(h_m^M, h_e^M) \in \mathcal{A}_M$, and $(c)$ follows from the fact that $P[ H_m \in \mathcal{A}_M(h_e^M)] =1$.


\end{document}